\newtheorem{theorem}{Theorem}[section]
\newtheorem{claim}{Claim}[section]
\newtheorem{proposition}{Proposition}[section]
\newtheorem{fact}{Fact}[section]
\newcommand{\qed}{\hfill $\Box$ \bigbreak}
\newenvironment{proof}{\noindent {\bf Proof.}}{\qed}
\newcommand{\remove}[1]{}
\begin{document}

\baselineskip  0.2in 
\parskip     0.1in 
\parindent   0.0in 

\title{{\bf Deterministic Distributed Construction\\ of $T$-Dominating Sets in Time $T$}}
\date{}
\newcommand{\inst}[1]{$^{#1}$}

\author{
Avery Miller\inst{1},
Andrzej Pelc\inst{1}\\
\inst{1} Universit\'{e} du Qu\'{e}bec en Outaouais, Gatineau, Canada.\\
E-mails: \url{avery@averymiller.ca}, \url{pelc@uqo.ca}\\
}

\date{ }
\maketitle

\begin{abstract}
A $k$-dominating set is a set $D$ of nodes of a graph such that, for each node $v$, there exists a node $w \in D$ at distance at most $k$ from $v$.
Our aim is the deterministic distributed construction of small $T$-dominating sets in time $T$ in networks modeled as undirected $n$-node graphs and under the $\cal{LOCAL}$ communication model.

For any positive integer $T$, if $b$ is the size of a pairwise disjoint collection of balls of radii at least $T$ in a graph, then $b$ is an obvious lower bound on the size of a
$T$-dominating set. Our first result shows that, even on rings, it is impossible to construct a $T$-dominating set of size $s$ asymptotically $b$ (i.e., such that $s/b \rightarrow 1$) 
in time $T$.

In the range of time $T \in \Theta (\log^* n)$, the size of a $T$-dominating set turns out to be very sensitive to multiplicative constants in running time.
Indeed, it follows from \cite{KP}, that for time $T=\gamma \log^* n$ with  large constant $\gamma$, it is possible to construct a $T$-dominating set whose size is a small fraction of $n$. 
By contrast, we show that, for time $T=\alpha \log^* n $ for small constant $\alpha$, the size of a $T$-dominating set must be a large fraction of $n$. 

Finally, when $T \in o (\log^* n)$, the above lower bound implies that, for any constant $x<1$, it is impossible to construct
a $T$-dominating set of size smaller than $xn$, even on rings.  
On the positive side, we provide an algorithm that constructs a  $T$-dominating set of size $n- \Theta(T)$ on all graphs.

\vspace{1ex}

\noindent {\bf Keywords:} $T$-dominating set, LOCAL model, distributed algorithm, undirected networks
\end{abstract}

\vfill

\vfill

\thispagestyle{empty}
\pagebreak

\section{Introduction}

\subsection{Background}

A $k$-dominating set is a set $D$ of nodes of a graph with the property that for each node $v$ there exists a node $w \in D$ at distance at most $k$ from $v$.
Our aim is the deterministic distributed construction of small $T$-dominating sets in time $T$, in networks modeled as undirected graphs.
Such sets are important in many applications. For example, placing facilities (e.g., gas stations or restaurants in a town, or databases in a communication network) at nodes of a $T$-dominating set guarantees that every node will be at distance
at most $T$ from some facility. However, in order to take advantage of this proximity, every node should know a short path to some nearby facility.
Then a prospective customer will be able to reach a nearby gas station or restaurant from any street crossing, and a mobile agent situated at any node of a network will be able to reach a nearby database. 
{In many applications related to computer networks, in particular when the bandwidth is large (e.g., in optical networks) the time needed to send any message to a node at distance $r$ in the underlying graph is (proportional to) $r$. Hence we may assume that, given time $T$, each node can learn only the locations of databases situated at distance at most $T$ from it.}
This is the reason why, given some time $T$, we look for $T$-dominating sets
(and not, e.g., just for $(T+1)$-dominating sets).
If the constructed set is not $T$-dominating then there are nodes in the graph which do not become aware of any node in the chosen set within time $T$. For reasons of economy,
we want the constructed $T$-dominating set to be as small as possible.

\subsection{Model and Problem Description} 

The network is modeled as an undirected graph with $n$ labeled nodes.
Labels are drawn from the set of integers $\{1,\dots,L\}$, where $L$ is polynomial in $n$. Each node has a distinct label.
Initially each node knows its label, its degree, and parameters $L$, and $T$.

We use the extensively-studied $\cal{LOCAL}$ communication model \cite{Pe}. In this model, communication proceeds in synchronous
rounds and all nodes start simultaneously. In each round, each node
can exchange arbitrary messages with all of its neighbours and perform arbitrary local computations. 
Hence, the decisions of a node $v$ in round $r$ in any deterministic algorithm are a function of: (1) the subgraph induced by nodes at distance at most $r$ from $v$,
except for the edges between nodes at distance exactly $r$ from $v$; and (2)  the degrees of all nodes at distance $r$ from $v$.
The {\em time} of a task is the minimum number of rounds sufficient to complete it by all nodes. 

It is well known that the synchronous process of the $\cal{LOCAL}$  model can be simulated in an asynchronous network. This can be achieved 
by defining for each node separately its asynchronous round $i$;
in this round, a node performs local computations, then sends messages stamped $i$ to all neighbours, and  waits until it gets messages stamped $i$ from all neighbours.
To make this work, every node is required to send at least one (possibly empty) message with each stamp until termination.
Thus, all of our results can be translated for asynchronous networks by replacing ``time of completing a task''  by ``the maximum number of asynchronous rounds  to complete it, taken over all nodes''.

A deterministic algorithm working in time $T$ distributedly constructs a $T$-dominating set if, after $T$ rounds, some nodes output 1, all other nodes output 0,
and the nodes that output 1 form a $T$-dominating set. In all algorithms leading to upper bounds in this paper, every node additionally learns a path of length at most $T$ to some node of the $T$-dominating set. 


We use the following terminology. When $f(n) \in \Theta (g(n))$, we say that functions $f$ and $g$ have the same order of magnitude. When $f(n)/g(n)$ converges to 1,
we say that $f$ and $g$ are asymptotically equal.

\subsection{Our results} 
For a given time $T$, we give upper and lower bounds on the size of a $T$-dominating set that can be deterministically constructed in time $T$.
The main technical contribution of this paper are lower bounds that
are valid  even  on the class of rings.

For any positive integer $T$, if $b$ is the size of a pairwise disjoint collection of balls of radii at least $T$ in a graph, then $b$ is an obvious lower bound on the size of a
$T$-dominating set. Our first result shows that, even on rings, it is impossible to construct a $T$-dominating set of size $s$ asymptotically $b$ (i.e., such that $s/b \rightarrow 1$) 
in time $T$. Indeed, we prove that for rings (where there exist $T$-dominating sets of size $b=\lceil n/(2T+1) \rceil$) any $T$-dominating set constructed in time $T$ must be of size  larger than
$\lambda n/(2T+1)$, for any $\lambda <3/2$. By contrast, it follows from \cite{KP} that a $T$-dominating set of size $O(n\log^* n/(2T+1))$ can be constructed in time $T$
in any graph, which gives size $o(n)$ for time $\omega(\log^* n)$.

In the range of time $T \in \Theta (\log^* n)$, the size of a $T$-dominating set turns out to be very sensitive to multiplicative constants in running time.
Indeed,  it follows from \cite{KP} that, for time $T=\gamma \log^* n$ with  large constant $\gamma$, it is possible to construct a $T$-dominating set whose size is a small fraction of $n$. More precisely, the algorithm from \cite{KP} has the property that, for any constant $x>0$, there exists a positive constant $\gamma$ for which this algorithm produces, on any sufficiently large graph, a $T$-dominating set of size smaller than $xn$ in time $T=\gamma \log^* n $. 
By contrast, we show that, for time $T=\alpha \log^* n $ for small constant $\alpha$, the size of a $T$-dominating set must be a large fraction of $n$. More precisely, we prove
that, for any constant $x<1$, there exists a constant $\alpha>0$, such that any algorithm constructing a $T$-dominating set in time $T=\alpha \log^* n $ will produce
a set of size at least $xn$ on some ring of arbitrarily large size~$n$. 

Finally, moving to very short time, i.e., when $T \in o (\log^* n)$, the above lower bound implies that, for any constant $x<1$, it is impossible to construct
a $T$-dominating set of size smaller than $xn$, even on rings.  
On the positive side, we provide an algorithm that constructs a  $T$-dominating set of size $n- \Theta(T)$ on all graphs.

Thus our results show two gaps in the minimum size of a $T$-dominating set that can be constructed in time $T$: the first gap is while moving from time $\omega(\log^* n)$
to time $\Theta (\log^* n)$, when this size goes from $o(n)$ to $\Theta (n)$, and the second gap is while moving from time $\Theta (\log^* n)$ to time $o (\log^* n)$, 
when this size becomes larger than $xn$ for any constant $x<1$.

\subsection{Related work}
Distributed solutions of combinatorial optimization problems on graphs have been intensely studied in the last two decades. Research was aimed at fast
vertex coloring \cite{BE,BE1}, fast construction of maximal independent sets \cite{AABCHK,KMNR,MW}, of dominating and $k$-dominating sets \cite{KW,KP},
and of minimum weight spanning trees \cite{GKP, KP}. Various communication models have been used, ranging from the $\cal{LOCAL}$
model used in this paper, to the $\cal{CONGEST}$ model in which messages must be of logarithmic size \cite{KP}, the radio network model  \cite{MW}, 
and to the highly contrived beeping model
\cite{AABCHK} in which a node can transmit only one beep in each round.

In \cite{KM}, the authors studied how fast a capacitated minimum dominating set can be distributedly constructed. They showed that, for general graphs,
every distributed algorithm achieving a non-trivial approximation ratio (even for uniform capacities) must have a time complexity that essentially grows linearly with the network diameter. In \cite{JRS,KW}, randomized distributed solutions for dominating set approximation were presented. 
In \cite{LW} the authors prove that, for any $f(n)$-approximation of the minimum dominating set or maximum independent set on Unit Disk Graphs, the time $g(n)$
of finding this approximation must satisfy $f(n)g(n) \in \Omega(\log ^* n)$.  
The paper most closely related to the present work is \cite{KP}. The authors present a distributed algorithm to find a $k$-dominating set of size at most $n/(k+1)$ in arbitrary $n$-node graphs.
Their algorithm runs in time $O(k\log^* n)$ in the $\cal{CONGEST}$ model.

\section{A general lower bound}

The following useful fact is a straightforward consequence of the definition of a $T$-dominating set.

\begin{fact}\label{notsparse}
	For any $T$-dominating set $S$ in a ring $R$, there must be at least one member of $S$ in each segment of $2T+1$ nodes.
\end{fact}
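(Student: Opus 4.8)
The plan is to argue directly (a short contrapositive-flavoured argument), exploiting the fact that in a ring the ball of radius $T$ around any node is ``thin'': it consists of exactly the $2T+1$ nodes forming the arc centered at that node.

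First I would fix an arbitrary segment $P$ of $2T+1$ consecutive nodes of the ring $R$, implicitly assuming $n \ge 2T+1$ (otherwise no such segment of distinct nodes exists and there is nothing to prove), and let $v$ denote its middle node, i.e., the $(T+1)$-st node of $P$. The key observation is that the set of nodes at distance at most $T$ from $v$ in $R$ is exactly $P$: moving at most $T$ steps clockwise from $v$ reaches precisely the $T$ nodes of $P$ on one side of $v$, moving at most $T$ steps counterclockwise reaches the $T$ nodes of $P$ on the other side, and together with $v$ itself this gives the $2T+1$ nodes of $P$; no node outside $P$ is within distance $T$ of $v$.

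Then I would invoke the definition of a $T$-dominating set applied to the node $v$: there must exist some $w \in S$ with $\dist(v,w) \le T$. By the observation above, any such $w$ lies in $P$, hence $P \cap S \ne \emptyset$. Since $P$ was an arbitrary segment of $2T+1$ consecutive nodes, this proves the claim.

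There is essentially no obstacle here; the only mild point of care is the degenerate regime $n \le 2T+1$, where the arc around $v$ would wrap around and the $2T+1$ listed positions are not all distinct — but then a segment of $2T+1$ distinct nodes simply does not exist, so the statement is vacuous, and in any case a single node already dominates the whole ring in that regime.
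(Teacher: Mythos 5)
Your argument is correct and is exactly the straightforward reasoning the paper has in mind: the paper states this fact without proof, calling it an immediate consequence of the definition, and your observation that the radius-$T$ ball around the middle node of the segment is contained in the segment is the intended justification. Nothing further is needed.
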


Fact \ref{notsparse} implies that $\lceil \frac{n}{2T+1} \rceil$ is a lower bound on the size of a $T$-dominating set in rings. In fact, every ring of size $n$ has a $T$-dominating set
of this size. Our first result shows that in time $T \in o(n)$ we cannot construct a $T$-dominating set of size even asymptotic in this lower bound.
(For $T \in \Omega (n)$ this question is meaningless, since $ \frac{n}{2T+1}$ is then $O(1)$.)

\begin{theorem}
	Consider any constant $\lambda$ smaller than 3/2 and any algorithm $\mathcal{A}$  that runs in time $T \in o(n)$ and outputs a $T$-dominating set. For sufficiently large $n$, there exists a ring of size $n$ for which $\mathcal{A}$ outputs a $T$-dominating set of size greater than $\lambda\frac{n}{2T+1}$.
\end{theorem}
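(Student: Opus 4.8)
The plan is to argue by contradiction, exploiting the locality of the algorithm together with an averaging argument over labelings. Throughout write $c=2T+1$; since $T\in o(n)$ we have $c=o(n)$, so $n/c\to\infty$. Suppose that for all sufficiently large $n$ the set $\mathcal{A}(R)$ has size at most $\lambda n/c$ on \emph{every} ring $R$ of size $n$. In the $\cal{LOCAL}$ model, whether a node $v$ joins $\mathcal{A}(R)$ is a function only of the labeled path on the $2T+1$ nodes within distance $T$ of $v$, taken up to reflection. Hence, if we label the ring by drawing each label independently and uniformly from $\{1,\dots,L\}$ (since $L$ is polynomial in $n$, the labels are pairwise distinct with probability $1-o(1)$, and we condition on this event), then the indicators $Y_v=\mathbf{1}[v\in\mathcal{A}(R)]$ are identically distributed, and $Y_u,Y_v$ are independent whenever $u,v$ are at distance more than $2T$ on the ring; that is, $(Y_v)$ is a $2T$-dependent $0/1$ process on the cycle $\mathbb{Z}_n$. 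Put $q:=\Pr[Y_v=1]$; by the contradiction hypothesis and linearity of expectation, $nq=\mathbb{E}|\mathcal{A}(R)|\le\lambda n/c$, so $q\le\lambda/c<\tfrac{3}{2c}$.

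The heart of the argument is the following claim about such processes. Since $\mathcal{A}$ is correct, Fact~\ref{notsparse} guarantees that, with probability $1$, every window of $c$ consecutive nodes of the cycle contains a node of $\mathcal{A}(R)$; equivalently, the $1$'s of $(Y_v)$ have all cyclic gaps in $\{1,\dots,c\}$. I claim that any $2T$-dependent $0/1$ process on $\mathbb{Z}_n$ with this covering property must contain at least $(1-o(1))\tfrac{3n}{2(2T+1)}$ ones, the $o(1)$ vanishing as $n/c\to\infty$. The intuition: a cover of density close to the trivial optimum $1/c$ would have to place its $1$'s at spacing essentially exactly $c$, i.e.\ be almost periodic with period $c$ over long stretches, which makes $Y_v$ nearly perfectly correlated with $Y_{v+c}$ — contradicting $2T$-dependence, which forces $Y_v\perp Y_{v+c}$ because $c>2T$. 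So the process must repeatedly ``lose phase'' and insert extra $1$'s to stay a valid cover; quantifying this by studying the distribution of the gaps between consecutive $1$'s (each at most $c$) and showing they cannot concentrate near $c$ yields the bound, the extremal local pattern being one that alternates ``long'' gaps (near $c$) with ``short'' ones (near $c/2$). Granting the claim, for $n$ large the lower bound exceeds $\lambda n/c$ (since $(1-o(1))\tfrac32>\lambda$), so $\mathbb{E}|\mathcal{A}(R)|>\lambda n/c$; hence at least one sampled labeling gives a ring of size $n$ on which $|\mathcal{A}(R)|>\lambda n/c$, which is exactly the assertion to be proved.

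The main obstacle is the quantitative claim with the \emph{correct} constant $3/2$: showing merely $q\ge(1+\varepsilon)/c$ for some fixed small $\varepsilon>0$ is comparatively easy — any departure from exact periodicity forces a constant density of extra $1$'s — but pinning down the extremal density of a $2T$-dependent cover of a cycle as precisely $(1-o(1))\tfrac{3}{2c}$ requires identifying the worst-case local behaviour (the alternating long/short gap pattern above) and proving no $2T$-dependent process beats it, while controlling the wrap-around/boundary corrections on the cycle so that the error term is \emph{relatively} $o(1)$ rather than a constant fraction (this is what makes the statement valid for \emph{every} $\lambda<3/2$ rather than only for $\lambda$ bounded away from $3/2$). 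A secondary, routine point is that the model requires pairwise-distinct labels whereas the argument samples i.i.d.\ labels: this is absorbed by the conditioning above (the coincidence event has probability $o(1)$ and perturbs $q$ by a negligible amount), or alternatively one can replace the random labeling by an explicit ring that is ``structurally periodic but label-distinct'' and carry out the same counting deterministically.
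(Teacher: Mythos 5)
There is a genuine gap. Your entire argument funnels into the claim that every $2T$-dependent $0/1$ process on $\mathbb{Z}_n$ with the covering property has density at least $(1-o(1))\tfrac{3}{2c}$, and you do not prove this claim; it is essentially the whole theorem. The consequences of $2T$-dependence that you actually have in hand are far too weak: stationarity gives the renewal identities $\sum_{g\le c} g\,p_g=1$ and $q=\sum_{g\le c}p_g$ (where $p_g$ is the probability that a given node is a member whose next member is at distance $g$), and independence at lag $\ge c$ gives only $p_c\le q^2$; optimizing this linear program pushes all the remaining mass onto $g=c-1=2T$ (which is \emph{inside} the dependence range and hence unconstrained) and yields only $q\gtrsim 1/(c-1)$, nowhere near $3/(2c)$. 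Getting to $3/(2c)$ would require exploiting correlations at all lags $k(c-1)$, $k\ge 2$, and identifying the true extremal gap distribution --- and your own candidate extremal pattern (gaps alternating between $c$ and $c/2$) has density $4/(3c)<3/(2c)$, which is inconsistent with the bound you need; this signals that the intuition has not been pinned down. There are also secondary problems with the setup: conditioning i.i.d.\ labels on distinctness destroys exact $2T$-dependence and stationarity; for $L$ merely polynomial in $n$ (e.g.\ $L=n$ or $L=n^2$) the all-distinct event does \emph{not} have probability $1-o(1)$; and since $q=\Theta(1/T)$ may itself be $o(1)$, an ``$o(1)$ perturbation'' of $q$ is not negligible --- you would need it to be $o(1/T)$.

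The paper avoids all of this with an explicit cut-and-paste adversary argument, which is where the constant $3/2$ actually comes from. Run $\mathcal{A}$ once on a reference ring, pick one member $m_i$ in each segment of length $2T+1$, and for each pair $(m_{4k},m_{4k+2})$ build a path gadget of $4T+3$ nodes consisting of the two radius-$T$ balls of $m_{4k}$ and $m_{4k+2}$ joined by a single extra node, so that exactly $2T+1$ nodes separate the two representatives. Splicing $c=\lfloor\tfrac{\lambda}{3}\tfrac{n}{2T+1}\rfloor$ such gadgets (plus filler) into a new ring of size $n$, indistinguishability forces both representatives in each gadget to output $1$ again, and Fact~\ref{notsparse} forces a third member strictly between them; thus each $(4T+3)$-node gadget contributes at least $3$ members, giving size $3c+4>\lambda\tfrac{n}{2T+1}$. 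If you want to salvage a probabilistic version, you would at minimum have to prove the extremal statement about finitely dependent covering processes with the correct constant, which is a substantial open sub-problem rather than a routine verification.
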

\begin{proof}
	For ease of exposition, assume that $2T+1$ divides $n$ and that $\frac{n}{2T+1}$ is divisible by 4. The proof can be modified if the latter assumptions are not satisfied.
	
	The high-level idea to prove our lower bound is as follows.
	We first execute algorithm $\mathcal{A}$  on a ring of size $n$ and we pick a representative member of the resulting $T$-dominating set
	in each of the segments of size $2T+1$ that form a partition of the ring. Next, we take representatives of even-numbered segments and partition them into consecutive pairs. For each pair, we construct a path consisting of their balls with radius $T$ and one additional node separating them. 
	We repeat this process on a different ring (disjoint from the first) to obtain additional paths of this type.
	We concatenate sufficiently many of these paths and add enough additional nodes to form a ring of size $n$. As a consequence, if we run algorithm $\mathcal{A}$ on this new ring, the representatives have the same balls with radius $T$
	as in the original ring, and hence act identically. Further, pairs of representatives are too far apart in the new ring to $T$-dominate all nodes, and consequently there must be an additional node in the $T$-dominating set between the representatives in each pair. This will imply our lower bound.  
	
	We now show how this idea is implemented. 
	Note that, if  $\lambda$ is a constant smaller than 3/2 and $T \in o(n)$, then $n - \frac{\lambda}{3}\frac{n}{2T+1}(4T+3) \geq 8T+4$.
	Let $R_1$ be the ring obtained from the path of nodes $[1,\ldots,n]$ by adding the edge $\{1,n\}$, and let $R_2$
	be the ring obtained from the path of nodes $[n+1,\ldots,2n]$ by adding the edge $\{n+1,2n\}$. 
	We illustrate our construction using ring $R_1$ to obtain paths $P_0,\dots, P_r$, and an analogous construction using ring $R_2$ will produce paths  
	$P_{r+1},\dots, P_{2r+1}$.
	
	We partition this ring into segments of size $2T+1$, namely, for each integer $i \in \{0,\ldots,\frac{n}{2T+1} - 1\}$, denote by $S_i$ the segment $[(2T+1)i+1,\ldots,(2T+1)(i+1)]$. Execute algorithm $\mathcal{A}$ on $R$, and call this execution $\mathcal{A}_1$. For each $i  \in \{0,\ldots, \frac{n}{2T+1} - 1\}$, let $m_i$ be the node in $S_i$ with smallest label that outputs 1 in this execution. By Fact \ref{notsparse}, $m_i$ is well-defined.
	
	Next, for each integer $j  \in \{0,\ldots, \frac{1}{2}\left(\frac{n}{2T+1}\right) - 1\}$, define $H_{2j}$ to be the path consisting of $2T+1$ nodes centered at $m_{2j}$, namely $[m_{2j}-T,\ldots,m_{2j},\ldots,m_{2j}+T]$. Note that, for every $j < j'$, the paths $H_{2j}$ and $H_{2j'}$ are disjoint since there are at least $2T+1$ nodes between $m_{2j}$ and $m_{2j'}$ in $R$ (for example, the nodes in $S_{2j+1}$.)

	Next, for each integer $k \in \{0,\ldots, r\}$, where $r=\frac{1}{4}\left(\frac{n}{2T+1}\right) - 1$, define $P_k$ to be the path obtained by taking node $v_k = (2T+1)(4k+1)+T+1$ (i.e., the middle node of segment $S_{4k+1}$), the paths $H_{4k},H_{4k+2}$, and adding the edges $\{v_k,m_{4k}+T\}$ and $\{v_k,m_{4k+2}-T\}$. Note that, in $P_k$, there exist $2T+1$ nodes between $m_{4k}$ and $m_{4k+2}$. Further, $|P_k| = 4T+3$.
	
	This concludes the construction of paths $P_0,\dots, P_r$. In the construction of paths $P_{r+1},\dots, P_{2r+1}$, the range for index $i$
	starts at $\frac{n}{2T+1}$ , the range of index $j$ starts at $\frac{1}{2}\frac{n}{2T+1}$,  and the range of index $k$ starts at $\frac{1}{4}\frac{n}{2T+1}$.

	Finally, let $c  = \lfloor\frac{\lambda}{3}\frac{n}{2T+1}\rfloor$, and construct a ring $R_c$ as follows.
	Choose $c$ paths $G_0 \cup \ldots \cup G_{c-1}$ from the set of paths $\{P_0,\dots,  P_{2r+1}\}$.   
	Let $G_c$ be a path consisting of $n-c(4T+3)$ nodes whose labels do not appear in $G_0 \cup \ldots \cup G_{c-1}$. Construct $R_c$ by concatenating the paths $G_0,\ldots,G_{c}$ and adding an edge between the endpoints of this path. Note that $|R_c| = n$. The above construction is illustrated in Figure \ref{construction}.
	
	\begin{figure}[!ht]
		
		\begin{center}
			\includegraphics[scale=0.6]{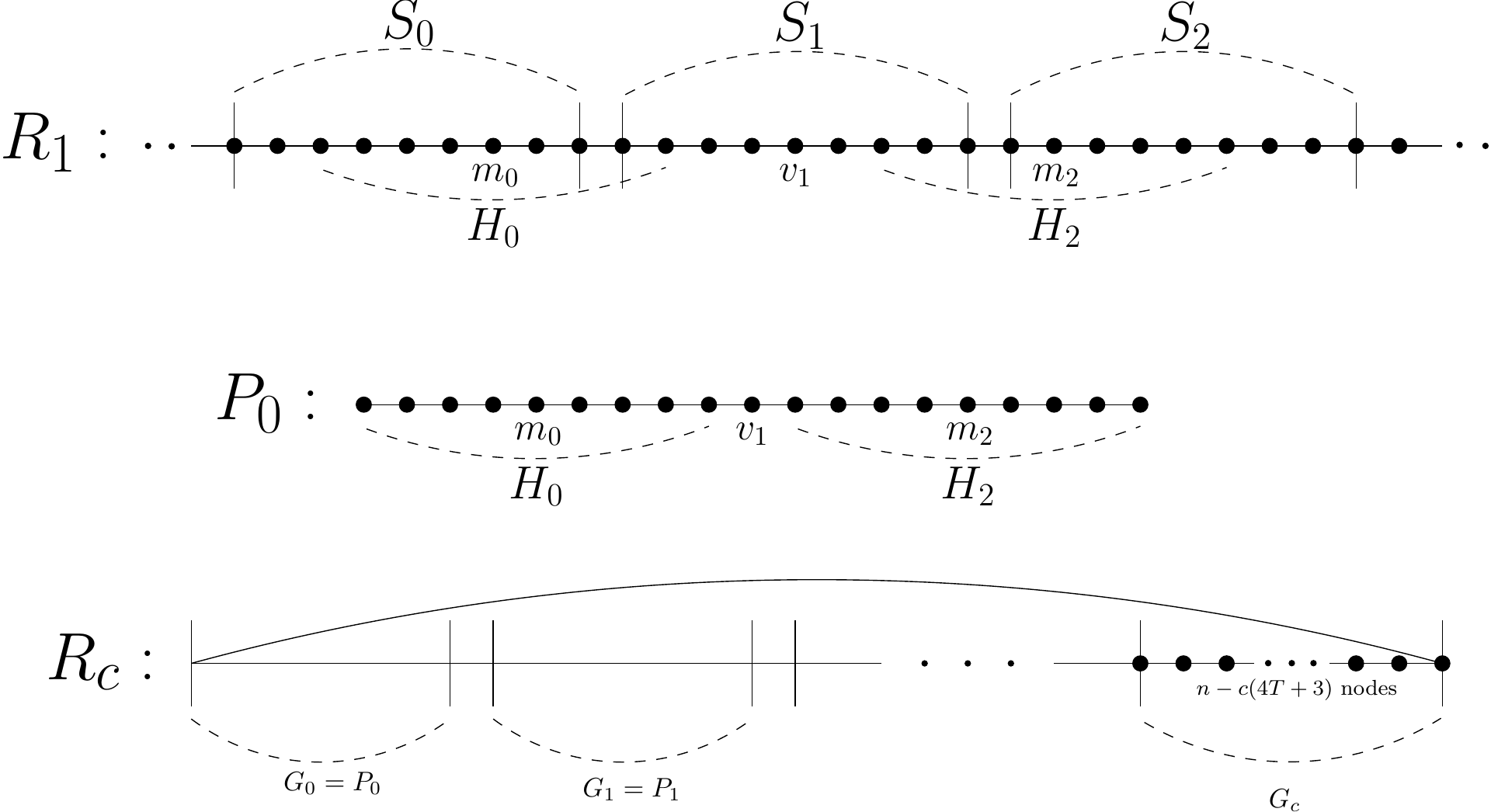}
		\end{center}
		\caption{An example of the construction of $R_c$ when $T=4$.}
		\label{construction}
	\end{figure}

	We now execute algorithm $\mathcal{A}$ on $R_c$, and call this execution $\mathcal{A}_2$. First, note that, for each $k \in \{0,\ldots,c-1\}$, the nodes $m_{4k}$ and $m_{4k+2}$ cannot distinguish between executions $\mathcal{A}_1$ and $\mathcal{A}_2$, so they will both output 1 in $\mathcal{A}_2$. Further, for each $k \in \{0,\ldots,c-1\}$, since there are $2T+1$ nodes between $m_{4k}$ and $m_{4k+2}$, Fact \ref{notsparse} implies that there must be at least one node between $m_{4k}$ and $m_{4k+2}$ that outputs 1. Hence, for each $k \in \{0,\ldots,c-1\}$, there are 3 nodes in $G_k$ that output 1. Finally, recall that $|G_c| =  n-c(4T+3) \geq n - \frac{\lambda}{3}\frac{n}{2T+1}(4T+3)$, and, by our choice of $\lambda$, this is bounded below by $8T+4$. By Fact \ref{notsparse}, there are at least 4 nodes in $G_c$ that output 1. Therefore, the number of nodes that output 1 in execution $\mathcal{A}_2$ is bounded below by $3c+4=3\lfloor\frac{\lambda}{3}\frac{n}{2T+1}\rfloor +4 \geq 3\frac{\lambda}{3}\frac{n}{2T+1} - 3 + 4 > \lambda\frac{n}{2T+1}$.
\end{proof}

We do not know if  a $T$-dominating set of size $\Theta(\frac{n}{2T+1})$ can be constructed in time $T$, even on rings. The best known upper bound of $O(\frac{n\log^ * n}{2T+1})$ on the size of a $T$-dominating set that can be constructed in any graph in time $T$
follows from \cite{KP}.

\section{Time $\Theta(\log^*n)$}

In the range of time $T \in \Theta (\log^* n)$, the size of a $T$-dominating set turns out to depend on the multiplicative constant in running time.
First notice that, for time $T=\gamma \log^* n$ with large constant $\gamma$, {Theorem 4.4 of} \cite{KP} implies that a $T$-dominating set of size  $xn$ for a small constant $x$ can be constructed in every graph. More precisely we have the following proposition.

\begin{proposition}\label{ub*}
	For every positive constant $x < 1$, there exists a positive constant $\gamma$ such that, for all sufficiently large networks of size $n$ and when $T=\gamma\log^*{n}$, there is an algorithm producing a $T$-dominating set of size at most $xn$ in time $T$ .
\end{proposition}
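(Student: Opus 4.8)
The plan is to obtain Proposition~\ref{ub*} as a direct corollary of Theorem~4.4 of \cite{KP}. That theorem provides, for every positive integer $k$, a deterministic distributed algorithm $\mathcal{B}_k$ running in the $\cal{CONGEST}$ model (hence also in the $\cal{LOCAL}$ model used here) which, on any $n$-node graph with distinct node identifiers from a polynomially-bounded range, outputs a $k$-dominating set of size at most $n/(k+1)$ and halts within $c_0\, k\log^* n$ rounds, where $c_0\ge 1$ is an absolute constant independent of $k$ and $n$; for $n$ large enough this bound absorbs the lower-order terms hidden in the $O(k\log^* n)$ estimate. Since every $k$-dominating set is also $k'$-dominating for each $k'\ge k$, it is enough to instantiate $\mathcal{B}_k$ with a constant value of $k$ that is small enough to push the output size below $xn$, and then pad the running time up to $T$.

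Concretely, given $x$ I would fix $k=\lceil 1/x\rceil$, so that $k+1>1/x$ and hence $n/(k+1)<xn$, and set $\gamma=c_0 k$. For $T=\gamma\log^* n$ (rounded to an integer in the standard way), the algorithm is: run $\mathcal{B}_k$ and let every node that has already produced its output stay idle until round $T$, at which point all nodes announce their outputs simultaneously. Correctness then reduces to three elementary checks: (i) the running time is at most $c_0 k\log^* n=\gamma\log^* n=T$, so the algorithm works in time $T$; (ii) since $\log^* n\ge 1$ for all $n\ge 2$, we have $T=\gamma\log^* n\ge\gamma=c_0 k\ge k$, so a $k$-dominating set produced by $\mathcal{B}_k$ is in particular $T$-dominating; and (iii) the size of this set is at most $n/(k+1)<xn$.

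Because the argument is purely a reduction, there is no conceptual difficulty; the care needed is all in the bookkeeping. The main such point is that the constant $c_0$ in the running-time bound of \cite{KP} must be genuinely uniform in $k$ and $n$ — this is precisely what lets $\gamma$ be chosen as an absolute constant depending only on $x$ — and the need to absorb lower-order terms into $c_0 k\log^* n$ only for $n$ beyond a threshold is exactly what produces the ``sufficiently large $n$'' clause in the statement. A secondary, routine point is to confirm that the input assumptions of \cite{KP} (unique identifiers from a polynomial range, knowledge of degree) are subsumed by the present model, so that $\mathcal{B}_k$ can legitimately be invoked as a black box, together with the harmless rounding of $\gamma\log^* n$ to an integer value of $T$.
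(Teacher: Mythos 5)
Your proposal is correct and follows exactly the route the paper intends: the paper gives no written proof of Proposition~\ref{ub*}, merely asserting that it follows from Theorem~4.4 of \cite{KP}, and your argument (take $k=\lceil 1/x\rceil$ so that $n/(k+1)<xn$, set $\gamma=c_0k$ so the $O(k\log^*n)$ running time fits within $T$, and note $T\ge k$ so the $k$-dominating set is $T$-dominating) is precisely the missing bookkeeping. The details all check out.
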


In contrast with the above positive result, we now show that, for small time $T \in \Theta(\log ^* n)$, the size of any $T$-dominating set produced in time $T$ must be a large fraction of $n$, even on rings with node labels from $\{1,\ldots,n\}$.

\begin{theorem}\label{lb}
	For any positive constant $x < 1$, there exists a positive constant $\alpha$ such that, for any algorithm $\cal A$ that takes input $T = \lfloor\alpha\log^*{n}\rfloor$ and finds a $T$-dominating set in time $T$ on all rings of size $n$, algorithm $\cal A$  produces a $T$-dominating set of size greater than $xn$, for arbitrarily large $n$.
\end{theorem}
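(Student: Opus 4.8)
The plan is to argue by contradiction at a single large $n$. Fix $x<1$, let $\alpha$ be a sufficiently small positive constant (any $\alpha<1/2$ will turn out to work, independently of $x$), put $T=\lfloor\alpha\log^*n\rfloor$, and assume that $\mathcal{A}$ outputs a $T$-dominating set of size at most $xn$ on \emph{every} ring of size $n$ with labels in $\{1,\dots,n\}$. The starting point is the usual locality fact for rings: after $T$ rounds a node's output depends only on the labelled path of the $2T+1$ nodes in its radius-$T$ ball (plus the knowledge that the two ends have degree $2$). If a set $S$ of $2T+1$ labels is placed in increasing order along an arc, then every node whose whole ball lies in that arc sees exactly the increasing sequence of $S$ and its reverse; let $c(S)\in\{0,1\}$ denote their common output. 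Thus $\mathcal{A}$ induces a $2$-colouring $c$ of the $(2T+1)$-element subsets of $\{1,\dots,n\}$, and the whole argument is carried out in terms of $c$.

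Next I would repeatedly invoke the hypergraph Ramsey theorem for $c$ to extract a family $A_1,\dots,A_k$ of pairwise disjoint monochromatic sets covering all but $o(n)$ of the labels. Writing $R^{(2T+1)}(g;2)$ for the Ramsey number guaranteeing a monochromatic $g$-set in a $2$-colouring of the $(2T+1)$-subsets, the extraction proceeds as long as the as-yet-uncovered label set $R_i$ has size at least $\rho:=R^{(2T+1)}(4T+1;2)$: inside such an $R_i$, Ramsey's theorem yields a monochromatic set, which one may take of size roughly $\log^{(2T)}(|R_i|)$ (iterated logarithm), since $R^{(2T+1)}(g;2)$ is bounded by a tower of $2$'s of height about $2T$. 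Because $2T+1\le\log^*n$ for $n$ large, one gets $\rho=o(n)$, the process runs for many rounds, the leftover label set has size $o(n)$, and — the key point for the second case below — every $A_i$ extracted while $|R_i|\ge\sqrt n$ has size at least $\tfrac12\log^{(2T)}(n)$, which is a tower of \emph{growing} height and hence $\omega(T)$.

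I would then split on the colours obtained. If some $A_i$ has colour $0$, place its labels in increasing order along an arc of a ring of size $n$ (filling the rest arbitrarily); the $|A_i|-2T\ge 2T+1$ nodes in the deep interior of the arc each see a $(2T+1)$-subset of $A_i$ in increasing order, so each outputs $c$ of it, namely $0$ — giving $2T+1$ consecutive nodes outside the output set, contradicting Fact~\ref{notsparse} and the correctness of $\mathcal{A}$. If instead every $A_i$ has colour $1$, build a ring of size $n$ by concatenating the arcs $A_1,\dots,A_k$ (each internally sorted) followed by an arc of the leftover labels; the deep interiors of these arcs all output $1$, so the output set has size at least $\sum_i(|A_i|-2T)=\bigl(\sum_i|A_i|\bigr)-2Tk\ge (n-\rho)-2Tk$. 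Since all but $O(\sqrt n)$ labels lie in arcs of size $\omega(T)$, we get $2Tk=o(n)$, hence the output set has size $n-o(n)>xn$ for $n$ large, contradicting the standing assumption. Either case yields a contradiction, so that assumption fails for every large $n$, which is exactly the claim.

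The main obstacle is the Ramsey bookkeeping, because the two cases pull in opposite directions: the colour-$0$ case only needs monochromatic sets of size $\Omega(T)$, while the colour-$1$ case needs them large enough ($\omega(T)$) that the ``$-2T$ per arc'' losses are negligible and also needs them extractable long enough to exhaust all but $o(n)$ of $\{1,\dots,n\}$. Both demands rest on the single estimate that a $2$-colouring of the $(2T+1)$-subsets of an $N$-set has a monochromatic subset of size about $\log^{(2T)}(N)$, used precisely in the regime $2T<\log^*n$, where $\log^{(2T)}(n)$ is a tower whose height grows with $n$ and therefore dwarfs $T=\Theta(\log^*n)$. Turning this into rigorous inequalities — pinning down $\alpha$, and verifying $\rho=o(n)$ and ``arc size $\omega(T)$ while $|R_i|\ge\sqrt n$'' via the tower arithmetic — is where the real work lies; the ring constructions and the remaining estimates are routine.
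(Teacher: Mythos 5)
Your argument is correct in outline and is a genuinely different route from the paper's. The paper proves Theorem~\ref{lb} by \emph{reduction to Linial's colouring lower bound}: assuming $\mathcal{A}$ outputs few $1$'s, it builds an $8$-colouring of the ring in time $<\frac{2}{3}\log^* n$ by $2$-colouring the stretches of members/non-members, and --- this is the delicate part --- re-running $\mathcal{A}$ at a second, smaller scale $T'=\lfloor\alpha\log^*(yT)\rfloor$ to break up long stretches of members; the contradiction is with Linial's theorem, used as a black box. You instead work directly with the Ramsey machinery that underlies Linial/Naor-type bounds: the induced $2$-colouring $c$ of sorted $(2T+1)$-windows (well-defined because a node sits at the median of its window, so its radius-$T$ view is determined by the label set), a greedy extraction of disjoint monochromatic sets, and the dichotomy ``some set has colour $0$'' (yielding $2T+1$ consecutive $0$'s, contradicting Fact~\ref{notsparse}) versus ``all have colour $1$'' (yielding $n-o(n)$ ones, since arcs extracted while $\ge\sqrt n$ labels remain have tower-type size $\omega(T)$, so the $2T$-per-arc boundary losses and the $\le\sqrt n+\rho$ uncovered labels are $o(n)$). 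Your approach buys several things: it uses $\mathcal{A}$ only at the single scale $n$ (the paper needs $\mathcal{A}$ to be correct on rings of size $\approx yT$ as well), it avoids the two-level recursion entirely, and it gives a formally stronger conclusion --- a single $\alpha$ independent of $x$, with output size $(1-o(1))n$ --- whereas the paper's $\alpha=\beta/(4y)$ degrades as $x\to 1$. What the paper's route buys is brevity given Linial's theorem and explicit, small constants; your route must redo the tower arithmetic (note the Erd\H{o}s--Rado bound $R^{(k)}(g;2)\le\mathrm{twr}_{k}(O(g))$ makes your $\log^{(2T)}$ really $\log^{(2T+1)}$ up to constants, which shifts the admissible $\alpha$ but changes nothing else). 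The remaining work you identify --- pinning down $\alpha$, and verifying $\rho=o(n)$ and the $\omega(T)$ arc sizes via $\log^*\rho\le 2T+O(\log^* T)<\log^* n-2$ --- is routine and does not hide a gap.
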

\begin{proof}
	Choose constants $\beta$ and $n_1$ such that, for any $n \geq n_1$, any algorithm working in time at most $\beta\log^*{n}$ on rings of size $n$ fails to produce a proper 8-colouring on some
	ring of size $n$. From \cite{Lin}, we know that the time $t$ needed to 8-colour rings of size $n$ satisfies $\log^{2t}(n) \leq 8$. It follows that $2t \geq \log^*{n} - 2$, so $t \geq \frac{1}{2}\log^*{n} - 1$. When $\log^*{n} > 3$, it follows that $\frac{1}{2}\log^*{n} - 1 \geq \frac{1}{2}\log^*{n} - \frac{1}{3}\log^*{n} = \frac{2}{3}\log^*{n}$. Therefore, $\beta = \frac{2}{3}$ and $n_1 = 16$ are suitable choices.
	
	Choose any integer constant $y$ such that $\frac{y-2}{y} > x$. Note that $x > 0$ implies $y > 2$. Let $\alpha = \frac{\beta}{4y}$. To obtain a contradiction assume that, for some positive integer $n_2$, we have an algorithm $\mathcal{A}$ with input $T$ that, for all $n \geq n_2$, when $T = \lfloor\alpha\log^*{n}\rfloor$, the algorithm finds a $T$-dominating set of size at most $xn$ on every ring of size $n$.
	To prove the theorem we show that,
	for any integer $n_0 > 0$, there exists an $n \geq \max\{n_0,n_1\}$ such that, for every ring $R$ of size $n$ with node labels from $\{1,\ldots,n\}$, we can 8-colour $R$ in time less than $\beta\log^*{n}$, which contradicts the choice of $\beta$ and $n_1$.
	
	Choose $n$ such that $\min\{\lfloor y\alpha\log^*{n} \rfloor, n\} \geq n_2$ and $n \geq \max\{n_0,n_1\}$. Consider any ring $R$ of size $n$ with node labels from $\{1,\ldots,n\}$. At a high level, our 8-colouring of $R$ will work as follows. Execute algorithm $\mathcal{A}$ with input $T = \lfloor\alpha\log^*{n}\rfloor$ on $R$ and call this execution $\mathcal{A}_1$. By assumption, when execution $\mathcal{A}_1$ terminates, a $T$-dominating set in $R$ of size at most $xn$ has been constructed. The nodes in this set will be called \emph{members}, and all other nodes will be called \emph{non-members}. A maximal segment of $R$ that contains only members (respectively, non-members) will be called a \emph{stretch} of members (respectively, non-members.) We would like to distributedly constant-colour each stretch, which can be done if all nodes see the
	boundaries of the stretch to which they belong. Hence, in order to break up long stretches,
	the members that do not have nearby non-members in both directions will execute algorithm $\mathcal{A}$ again, but with a
	carefully chosen input $T'$ smaller than $\lfloor\alpha\log^*{n}\rfloor$. Call this execution $\mathcal{A}_2$. The nodes in the $T'$-dominating set constructed in this execution will be called \emph{survivors}, and all other nodes that were involved in execution $\mathcal{A}_2$ will be called \emph{non-survivors}. As before, a maximal segment of $R$ that contains only survivors (respectively, non-survivors) will be called a \emph{stretch} of survivors (respectively, non-survivors.) We will prove that stretches of 
	members, non-members, survivors and non-survivors are now short enough for their elements to see their boundaries. 
	Finally, the nodes will properly two-colour the stretch to which they belong: stretches of non-members  using colours $\{1,2\}$, stretches of members using colours $\{3,4\}$, stretches of non-survivors using colours $\{5,6\}$, and stretches of survivors using colours $\{7,8\}$. This will properly colour the entire ring. For a full description of the algorithm executed at each node $v$, see Algorithm \ref{colouralg}. We denote by $\ell(v)$ the label of node $v$. 
	

	\begin{algorithm}[H]
		\footnotesize
		\caption{\texttt{EightColourRing}$(n,x)$}
		\begin{algorithmic}[1]
			\State Let $y$ be an integer such that $\frac{y-2}{y} > x$
			\State $\alpha \leftarrow \frac{\beta}{4y}$
			\State $T \leftarrow \lfloor \alpha\log^*{n} \rfloor$
			\State Using $2yT$ rounds, learn the labels of all nodes within distance $2yT$ of $v$
			\State For each node $w$ within distance $2yT - T$ of $v$:
			\State\indent  Run algorithm $\mathcal{A}$ with input $T$ on $w$. Call this execution $\mathcal{A}_1(w)$\label{executeA1}
			\State $\mathcal{M} \leftarrow$ set of nodes $w$ such that $\mathcal{A}_1(w)$ outputs 1 /* set of members */ 
			\State $\mathcal{NM} \leftarrow$ set of nodes $w$ such that $\mathcal{A}_1(w)$ outputs 0  /* set of non-members */ 
			\State If $v \in \mathcal{NM}$:
			
			\State \indent $(v_1,\ldots,v_k) \leftarrow$ the stretch of nodes in $\mathcal{NM}$ containing $v$, with $\ell(v_1) < \ell(v_k)$\label{NMFindStretch}
			\State\indent  $v$ gets colour 1 if its distance from $v_1$ is even
			\State\indent  $v$ gets colour 2 if its distance from $v_1$ is odd
			
			\State If $v \in \mathcal{M}$:

			\State\indent  If  $v$ belongs to a stretch of length at most $yT$: \label{short}
			\State\indent\indent  $(v_1,\ldots,v_k) \leftarrow$ the stretch of nodes in $\mathcal{M}$ containing $v$, with $\ell(v_1) < \ell(v_k)$\label{MFindStretch}
			
			\State\indent\indent   $v$ gets colour 3 if its distance from $v_1$ is even
			\State\indent \indent  $v$ gets colour 4 if its distance from $v_1$ is odd
			
			\State\indent  Else:
			
			\State\indent \indent  For each node $w$ in the same stretch as $v$ and within distance $yT$ from $v$:
			\State\indent\indent\indent Run algorithm $\mathcal{A}$ with input $T' = \lfloor \alpha\log^*{yT} \rfloor$ on $w$. Call this execution $\mathcal{A}_2(w)$\label{executeA2}
			\State\indent \indent  $\mathcal{S} \leftarrow$ set of nodes $w$ where $\mathcal{A}_2(w)$ outputs 1 /* set of survivors */ 
			\State\indent \indent  $\mathcal{NS} \leftarrow$ set of nodes $w$ where $\mathcal{A}_2(w)$ outputs 0 /* set of non-survivors */ 
			
			\State\indent \indent  If $v \in \mathcal{NS}$:
			
			\State\indent \indent \indent  $(z_1,\ldots,z_m) \leftarrow$ stretch of nodes in $\mathcal{NS}$ containing $v$, with $\ell(z_1) < \ell(z_m)$\label{NSFindStretch}
			\State\indent \indent \indent  $v$ gets colour 5 if its distance from $z_1$ is even
			\State\indent \indent \indent  $v$ gets colour 6 if its distance from $z_1$ is odd
			
			\State\indent \indent  If $v \in \mathcal{S}$:
			
			\State\indent \indent \indent  $(z_1,\ldots,z_m) \leftarrow$ stretch of nodes in $\mathcal{S}$ containing $v$, with $\ell(z_1) < \ell(z_m)$\label{SFindStretch}
			\State\indent \indent \indent  $v$ gets colour 7 if its distance from $z_1$ is even
			\State\indent \indent \indent  $v$ gets colour 8 if its distance from $z_1$ is odd

		\end{algorithmic}
		\label{colouralg}
	\end{algorithm}

	

	We now prove some useful facts about the number of nodes in any stretch of $R$ after executing \texttt{EightColourRing}. First, notice that any stretch of 
	members that did not execute algorithm $\cal A$ on line \ref{executeA2} is of length at most $yT$ (see line \ref{short}).  
	The next two claims follow from the fact that, since $\mathcal{A}$ produces a $T$-dominating set, every non-member (resp., non-survivor) is at distance at most $T$ from at least one member (resp., survivor).
	
	\begin{claim}\label{nonmembers}
		Every stretch of non-members contains at most $2T$ nodes.
	\end{claim}
	\begin{claim}\label{nonsurvivors}
		Every stretch of non-survivors contains at most $2T$ nodes.
	\end{claim}
	
	The following claim implies that long stretches of members are broken up into short stretches of survivors or of non-survivors.
	
	\begin{claim}\label{survivors}
		Every stretch of survivors contains less than $yT$ nodes.
	\end{claim}
	We prove the claim by way of contradiction. Assume that there exists a stretch of survivors containing a sequence of nodes $(z_1,\ldots,z_{yT})$. By definition, this means that the nodes $z_1,\ldots,z_{yT}$ outputted 1 in the execution $\mathcal{A}_2$, i.e.,  when provided $T' = \lfloor \alpha\log^*{yT} \rfloor$ as input. So, we construct a ring $R'$ of size $yT$ such that
	when algorithm $\cal A$ is executed on $R'$ with input $T' = \lfloor \alpha\log^*{yT} \rfloor$, nodes $z_{T+1},\ldots,z_{(y-1)T}$ output the same value as in execution $\mathcal{A}_2$ on $R$. In particular, we obtain $R'$ from $R$ by taking the segment $(z_1,\ldots,z_{yT})$ and adding the edge $\{z_1,z_{yT}\}$. We now consider the execution of $\mathcal{A}$ on $R'$ with input $T' = \lfloor \alpha\log^*{yT} \rfloor$, which we will call $\mathcal{A}_3$. Execution $\mathcal{A}_3$ consists of $\lfloor\alpha\log^*{(yT)}\rfloor = \lfloor\alpha\log^*{(y\lfloor\alpha\log^*{n}\rfloor)}\rfloor \leq \lfloor\alpha\log^*{n}\rfloor = T$ rounds. In particular, this means that the executions $\mathcal{A}_2$ and $\mathcal{A}_3$ are indistinguishable to each of the nodes $z_{T+1},\ldots,z_{(y-1)T}$ since they have the same balls with radius $T$ (and, therefore, the same balls with radius $T'$) in both executions. It follows that $z_{T+1},\ldots,z_{(y-1)T}$ output 1 in execution $\mathcal{A}_3$, hence they belong to the $T'$-dominating set constructed in execution $\mathcal{A}_3$.  Consequently, the $T'$-dominating set constructed in execution $\mathcal{A}_3$ has size at least $(y-2)T = \frac{(y-2)T}{yT}yT = \frac{y-2}{y}|R'| > x|R'|$. Finally, by our choice of $n$, it follows that $|R'| = yT = y\lfloor\alpha\log^*{n}\rfloor \geq n_2$. But this contradicts the assumption that, for every ring $R'$ of size at least $n_2$, algorithm $\mathcal{A}$, when given input $T'= \lfloor\alpha\log^*{|R'|}\rfloor$, produces a $T'$-dominating set of size at most $x|R'|$. This concludes the proof of the claim.
	
	
	Next, we show that the colouring can be carried out. In particular, at lines \ref{NMFindStretch}, \ref{MFindStretch}, \ref{NSFindStretch}, and \ref{SFindStretch} of \texttt{EightColourRing}, node $v$ must identify all of the nodes in $R$ that belong to the stretch containing $v$. The following two claims show that this is possible.
	
	\begin{claim}
		At lines \ref{NMFindStretch} and \ref{MFindStretch}, the sequence $(v_1,\ldots,v_k)$ can be determined by $v$.
	\end{claim}
	
	In order to prove the claim, 
	let $v_0$ and $v_{k+1}$ be the neighbours of $v_1$ and $v_k$, respectively, that are not contained in $\{v_1,\ldots,v_k\}$. The distance from $v$ to each node in $\{v_0,\ldots,v_{k+1}\}$ is at most $k$. 
	
	First, consider line \ref{NMFindStretch}. By Claim \ref{nonmembers} and since $y>2$, the distance from $v$ to each node in $\{v_0,\ldots,v_{k+1}\}$ is at most $2T \leq 2yT - T$. Consequently, after performing line \ref{executeA1}, $v$ has determined that nodes $v_0$ and $v_{k+1}$ have output 1 and nodes $v_i$ for $0<i<k+1$ have output 0, from which it deduces that it belongs to stretch $(v_1,\ldots,v_k)$.
	
	At line \ref{MFindStretch}, since $y > 2$ and the condition in line \ref{short} evaluated to true, the distance from $v$ to each node in $\{v_0,\ldots,v_{k+1}\}$ is at most $yT \leq 2yT - T$. Consequently, after performing line \ref{executeA1}, $v$ has determined that nodes $v_0$ and $v_{k+1}$ have output 0 and nodes $v_i$ for $0<i<k+1$ have output 1, from which it deduces that it belongs to stretch $(v_1,\ldots,v_k)$. This concludes the proof of the claim.

	\begin{claim}
		At lines \ref{NSFindStretch} and \ref{SFindStretch}, the sequence $(z_1,\ldots,z_m)$ can be determined by $v$.
	\end{claim}
	In order to prove the claim, 
	let $z_0$ and $z_{m+1}$ be the neighbours of $z_1$ and $z_m$, respectively, that are not contained in $\{z_1,\ldots,z_m\}$.
	
	At line \ref{NSFindStretch}, by Claim \ref{nonsurvivors} and since $y>2$, the distance from $v$ to each node in $\{z_0,\ldots,z_{m+1}\}$ is at most $2T \leq yT$.
	Consequently, after performing line \ref{executeA2}, $v$ has determined that nodes $z_0$ and $z_{m+1}$ have output 1 and nodes $z_i$ for $0<i<m+1$ have output 0, from which it deduces that it belongs to stretch $(z_1,\ldots,z_m)$.

	Next, consider line \ref{SFindStretch}. By Claim \ref{survivors}, the distance from $v$ to each node in $\{z_0,\ldots,z_{m+1}\}$ is at most $yT$.
	Consequently, after performing line \ref{executeA2}, $v$ has determined that nodes $z_0$ and $z_{m+1}$ have output 0 and nodes $z_i$ for $0<i<m+1$ have output 1, from which it deduces that it belongs to stretch $(z_1,\ldots,z_m)$. 
	This concludes the proof of the claim.
	
	Finally, since \texttt{EightColourRing} properly 2-colours each stretch, and every two neighbouring stretches use disjoint sets of colours, it follows that $R$ has been properly 8-coloured. The number of communication rounds used by \texttt{EightColourRing} is $2yT \leq 2y\frac{\beta}{4y}\log^*{n}< \beta\log^*{n}$.
\end{proof}

The above theorem implies that
there exists a positive constant $\alpha$ such that, for any algorithm $\cal A$ that takes input $T = \lfloor\alpha\log^*{n}\rfloor$ and finds a $T$-dominating set in time $T$ on all rings of size $n$, algorithm $\cal A$  produces a $T$-dominating set of size $\Omega(n)$, for arbitrarily large $n$. It is interesting to compare this result with the
lower bound from \cite{LW}. In particular, when restricting attention to constructing $T$-dominating sets in time $T=\alpha \log^*n$, the result from \cite{LW} only implies 
the lower bound $\Omega(n/\log^*n)$ on the size of the $T$-dominating set, regardless of the choice of the constant $\alpha$.

\section{Time $o(\log^*n)$}

We first observe that
Theorem \ref{lb} implies a strict lower bound on the size of $T$-dominating sets that can be constructed in time $T \in o(\log^*n)$. Indeed, in this case,
for any constant $\alpha >0$ we have
$T \leq \alpha \log^* n$ for sufficiently large $n$. Hence Theorem \ref{lb} implies the following result.

\begin{proposition}
	Suppose that  $T \in o(\log^*n)$. For any positive constant $x < 1$,  any algorithm that takes input $T$ and finds a $T$-dominating set in time $T$ on all rings of size $n$ produces a $T$-dominating set of size greater than $xn$, for arbitrarily large $n$.
\end{proposition}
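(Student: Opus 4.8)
The plan is to derive this directly from Theorem~\ref{lb} by a padding argument, exploiting the fact that both the domination requirement and the time budget are monotone in $T$: shrinking $T$ only makes the required set ``more dominating'' and leaves more rounds available, so any algorithm for a smaller $T$ yields one for a larger $T$, and hence the lower bound of Theorem~\ref{lb} transfers downward. Concretely, fix a constant $x<1$ and invoke Theorem~\ref{lb} to obtain a constant $\alpha>0$ with the following property: every algorithm that, given input $T'=\lfloor\alpha\log^*{n}\rfloor$, finds a $T'$-dominating set in time $T'$ on all rings of size $n$ must, for arbitrarily large $n$, output a set of size greater than $xn$ on some ring of size $n$.

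Now let $\mathcal{B}$ be an arbitrary algorithm that takes input $T=f(n)$ for some $f\in o(\log^*{n})$ and finds a $T$-dominating set in time $T$ on all rings of size $n$. Since $f\in o(\log^*{n})$, there is an $N_0$ with $f(n)\le\lfloor\alpha\log^*{n}\rfloor$ for all $n\ge N_0$. I would then define an algorithm $\mathcal{B}'$ that takes input $T'=\lfloor\alpha\log^*{n}\rfloor$ and works as follows: every node first recovers $n$ (the labels are drawn from $\{1,\ldots,n\}$, so $L=n$ is known), computes $T=f(n)$, simulates $\mathcal{B}$ with input $T$ for $T$ rounds, remains idle for the remaining $T'-T$ rounds (this idling is not even strictly needed, since running in time $T\le T'$ counts as running in time $T'$), and outputs exactly what $\mathcal{B}$ output. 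By construction $\mathcal{B}'$ runs in time $T'$, its output equals the output of $\mathcal{B}$, and that output is a $T$-dominating set; since $T\le T'$ whenever $n\ge N_0$, it is also a $T'$-dominating set. Hence $\mathcal{B}'$ is precisely an algorithm of the kind covered by Theorem~\ref{lb}.

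Applying Theorem~\ref{lb} to $\mathcal{B}'$ produces arbitrarily large $n$ and, for each, a ring of size $n$ on which $\mathcal{B}'$ — and therefore $\mathcal{B}$ — outputs more than $xn$ nodes. Since these $n$ can be taken $\ge N_0$, the domination/padding step above is valid for all of them, and the ``arbitrarily large $n$'' quantifier survives the reduction. I expect the only delicate point to be this bookkeeping with the two size thresholds (the $n_2$ implicit inside Theorem~\ref{lb} and the $N_0$ above), so that the infinite family of hard instances is not lost — but this is routine, since the intersection of an infinite set of sizes with $[N_0,\infty)$ remains infinite. Finally, the degenerate case where $f$ is eventually $0$ is immediate, as a $0$-dominating set must contain every node of the ring, which already has size $n>xn$.
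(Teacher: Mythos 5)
Your proof is correct and takes essentially the same route as the paper: the paper's entire argument for this proposition is the two-sentence observation that, since $T \in o(\log^*n)$, we have $T \leq \alpha\log^*n$ for sufficiently large $n$ for the constant $\alpha$ supplied by Theorem~\ref{lb}, so that theorem applies directly. Your padding/monotonicity reduction (a $T$-dominating set with $T\le T'$ is a $T'$-dominating set, and running in time $T\le T'$ counts as running in time $T'$) is just a more explicit spelling-out of that observation, and the bookkeeping you flag between your $N_0$ and the size threshold inside Theorem~\ref{lb} is routine, as you say.
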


On the positive side, we show that, while a trivial $T$-dominating set consists of all $n$ nodes, we can reduce this size in time $T$ by a number of nodes proportional to $T$. The following algorithm is executed by a node with label $\ell$.

\begin{algorithm}[H]
	\caption{\texttt{ChooseSmallest}$(T)$}
	\footnotesize
	\begin{algorithmic}[1]
		\State $output \leftarrow 0$
		\State $r \leftarrow \lfloor T/2 \rfloor$
		
		\State Using $r$ communication rounds, get the set $S$ of labels of all nodes within distance $r$\label{learn}
		\State $min \leftarrow $ smallest label in $S$
		\State Using $r$ communication rounds, get the set $M$ of values of $min$ at all nodes within distance~$r$\label{getmins}
		\State {\bf If} $\ell \in M$:
		\State \indent $output \leftarrow 1$\label{join}
		\State {\bf return} $output$
		
	\end{algorithmic}
	\label{choosesmallest}
\end{algorithm}

\begin{theorem}
	For any positive integer $T$, Algorithm {\tt ChooseSmallest} produces a $T$-dominating set of size at most $n-\lfloor T/2 \rfloor$ in time $T$.  
\end{theorem}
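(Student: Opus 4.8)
The plan is to verify the three parts of the statement one at a time --- the round complexity, the fact that the set of nodes outputting $1$ is $T$-dominating, and the size bound $n-\lfloor T/2\rfloor$ --- with only the last part calling for a genuine argument. Write $r=\lfloor T/2\rfloor$, let $B_r(v)$ denote the set of nodes at distance at most $r$ from $v$, and let $\phi(v)$ be the (unique) node of smallest label in $B_r(v)$, so that $\ell(\phi(v))$ is exactly the value $min$ computed by $v$ on line~\ref{learn}. The round complexity is immediate: lines~\ref{learn} and~\ref{getmins} each cost $r$ communication rounds and everything else is local, for a total of $2\lfloor T/2\rfloor\le T$.

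For the remaining two parts I would first reformulate the output rule. Unwinding the test on line~\ref{join}: a node $w$ outputs $1$ iff $\ell(w)$ occurs among the $min$-values of the nodes within distance $r$ of $w$, i.e.\ iff $w=\phi(u)$ for some $u$ --- and the proviso ``$u$ within distance $r$ of $w$'' is automatic, since $\phi(u)=w$ already forces $w\in B_r(u)$. Hence the output set is exactly $W:=\phi(V)$, the image of the map $\phi$. The domination property is then one line: given any node $v$, the node $w=\phi(v)$ lies in $W$ and has $\dist(v,w)\le r\le T$, so $W$ is $T$-dominating (in fact $r$-dominating); and since $w$ lies in $v$'s explored radius-$r$ neighbourhood, $v$ also learns a path of length $\le r$ to $w$.

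The substance of the proof is the bound $|W|=|\phi(V)|\le n-r$. I would assume $r<n$ (if $r\ge n$, every radius-$r$ ball is all of $V$, so $\phi$ is constant and the algorithm outputs a single node --- $n-\lfloor T/2\rfloor$ is simply not the relevant quantity in that range) and show that the $r$ nodes carrying the $r$ largest labels are all missed by $\phi$. Suppose instead that some top-$r$-labelled node $w$ equals $\phi(u)$ for some $u$. Then every node of $B_r(u)$ has label at least $\ell(w)$; since at most $r$ nodes have label $\ge\ell(w)$, this yields $|B_r(u)|\le r$. But in a connected $n$-node graph a ball of radius $r$ contains at least $\min(n,r+1)$ nodes --- by a breadth-first-search argument: as long as the ball around $u$ is not yet all of $V$, increasing its radius by $1$ adds at least one new node, so the first $r$ increments add $r$ nodes to $\{u\}$ --- and since $r<n$ this lower bound equals $r+1$, a contradiction. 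Therefore $W$ avoids the $r$ highest-labelled nodes and $|W|\le n-r=n-\lfloor T/2\rfloor$.

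The only real obstacle is this last counting step, whose crux is combining two innocuous-looking facts: that being the label-minimum of \emph{some} radius-$r$ ball confines that ball to the few highest-labelled nodes, and that radius-$r$ balls in a connected graph are not that small. Everything else drops out mechanically from the code of \texttt{ChooseSmallest}. I would also note plainly that the bound is meant in the regime $\lfloor T/2\rfloor<n$, which is precisely the regime of interest here (e.g.\ $T\in o(\log^*n)$, where $T$ is tiny compared with $n$).
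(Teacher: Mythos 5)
Your proof is correct and follows essentially the same route as the paper's: the round count and the domination property are handled identically, and the size bound comes from the same key observation that a node among the $\lfloor T/2\rfloor$ highest-labelled ones cannot be the label-minimum of any radius-$\lfloor T/2\rfloor$ ball, because such a ball contains at least $\lfloor T/2\rfloor+1$ nodes. Your reformulation of the output set as the image of the map $\phi$, and your explicit justification of the ball-size lower bound via connectivity (a point the paper leaves implicit), are cosmetic improvements rather than a different argument.
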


\begin{proof}
	First, observe that Algorithm \ref{choosesmallest} uses $2\lfloor T/2 \rfloor \leq T$ communication rounds. 
	
	Next, we show that the nodes that output 1 form a $\lfloor T/2 \rfloor$-dominating set. To see why, consider an arbitrary node $v$, and note that the value of $min$ at $v$ specifies the label of some node $w$ within distance $\lfloor T/2 \rfloor$ from $v$. At line \ref{getmins} in node $w$'s execution of Algorithm \ref{choosesmallest}, $w$ will learn the value of $min$ at $v$, so $w$ will change its output to 1 at line \ref{join}. Thus, $v$ is dominated by $w$.
	
	Finally, we show that at most $n-\lfloor T/2 \rfloor$ nodes output 1. It is sufficient to show that all of the nodes with the largest $\lfloor T/2 \rfloor$ labels output 0. Consider any node $w$ such that fewer than $\lfloor T/2 \rfloor$ nodes in the network have a label larger than $w$'s label. To obtain a contradiction, assume that $w$ outputs 1. It follows from the algorithm's description that there is some node $z$ (possibly equal to $w$) within distance $\lfloor T/2 \rfloor$ from $w$ such that the value of $min$ at $z$ is equal to $w$'s label. Therefore, $w$'s label is the smallest out of all of the labels of nodes within distance $\lfloor T/2 \rfloor$ from $z$. However, there are at least $\lfloor T/2 \rfloor + 1$ nodes within distance $\lfloor T/2 \rfloor$ from $w$, which means that at least $\lfloor T/2 \rfloor$ of them have a label larger than $w$'s label. This contradicts the assumption that fewer than $\lfloor T/2 \rfloor$ nodes in the network have a label larger than $w$'s label. It follows that all of the nodes with the largest $\lfloor T/2 \rfloor$ labels output 0, which gives the desired upper bound on the number of nodes that output 1.
\end{proof}

Note that, on line \ref{learn} of Algorithm {\tt ChooseSmallest}$(T)$, each node $v$ learns a short path to the node with label $min$, i.e., to some node in the $T$-dominating set.

\section{Conclusion}

We established upper and lower bounds on the size of a $T$-dominating set that can be constructed in time $T$, for various times $T$.
While the remaining gaps between these bounds are not large, several interesting problems remain open.

In the time range $T \in \omega(\log^* n)$, it remains open if our lower bound $\lambda n/(2T+1)$, for any $\lambda <3/2$, can be sharpened to $\Omega(n\log^* n/(2T+1))$,
i.e., if the upper bound following from \cite{KP} has optimal order of magnitude.

When $T \in \Theta(\log^* n)$, probably the most interesting question concerns the upper end of this time range. Is there a constant $C$ so large that a $T$-dominating set of size
$o(n)$ can be constructed in time $T=C \log^*n$? (The result from \cite{KP} implies that we can construct such a set whose size is an arbitrarily small constant fraction of $n$.) 
Another question concerns determining the minimum time $T$ to construct a $T$-dominating set whose size is a given fraction of $n$. More precisely,  for a given constant $0<x<1$, what is the minimum constant $\xi$ such that a $T$-dominating set of size $xn$ can be constructed in time
$T=\xi \log^* n$? 

In the time range $T \in o(\log^* n)$, our results leave very little room, as size $xn$ of a $T$-dominating set, for any constant $x<1$, is excluded.
Nevertheless, it remains open if our lower bound can be sharpened to $n-\Theta(T)$ for such small values of $T$.

{In this paper we chose the $\cal{LOCAL}$ model, in which nodes can send messages of arbitrary size in each round.
	This is a reasonable assumption when the bandwidth is large, e.g., in optical networks. When the size of the bandwidth
	is more restricted, it would be more suitable to use the $\cal{CONGEST}$ model, in which only messages of size logarithmic in the size of the network can be sent in each round. It remains open how our results change in such a model. Of course, the lower bounds valid for the $\cal{LOCAL}$ model still hold for the more restrictive $\cal{CONGEST}$ model, but, for example, the time of Algorithm 2 would change, as it calls for sending large messages, which could potentially use many rounds just to reach immediate neighbours.}

\section*{Acknowledgments}
This research was partially supported by NSERC discovery grant 8136 -- 2013 and by the Research Chair in Distributed Computing at the Universit\'e du Qu\'{e}bec en Outaouais.

\bibliographystyle{elsarticle-num}


\end{document}